\newcommand\ftz{\footnotesize}
\newcommand\frct [2]
\newcommand\given{\, |\, }
\newcommand\almostsure{\buildrel a.s. \over  \longrightarrow}
\newcommand\inprob{\buildrel {\mathbb P} \over  \longrightarrow}
\newcommand\convD{\buildrel {\cal D} \over \longrightarrow }
\newcommand\Prob{{\mathbb P}}
\newcommand\field{{\mathbb F}}
\newcommand\E{{\mathbb E}}
\newcommand\V{{\mathbb V}{\rm ar}}
\newcommand\Cov{{\mathbb C}{\rm ov}}
\newcommand\matA{{\bf A}}
\newcommand\matL{{\bf L}}
\newcommand\matP{{\bf P}}
\newcommand\matT{{\bf T}}
\newcommand\matU{{\bf U}}
\newcommand\matW{{\bf W}}
\newtheorem{lemma}{Lemma}[section]
\newtheorem{cor}{Corollary}[section]
\newtheorem{theorem}{Theorem}[section]
\newtheorem{example}{Example}[section]
\begin{document}
\begin{center}
{\bf \Huge Probabilistic analysis of arithmetic coding showing its robustness}

\bigskip
{\bf \large
Hosam M.\ Mahmoud}\footnote{Department of Statistics,
The George Washington University,
Washington, D.C. 20052, U.S.A. Email: hosam@gwu.edu} \qquad
{\bf \large Hans J.\ Rivertz}\footnote{Department of Computer Science, Faculty of Information Technology and Electrical Engineering, Norwegian University of Science and Technology.
Email: hans.j.rivertz@ntnu.no}

\bigskip
{\bf \large \today}
\end{center}

\bigskip 
\section*{Abstract}     
We probabilistically analyze the performance of the arithmetic coding algorithm under a probability model for binary data in which  a message is received by a coder from 
a source emitting independent equally distributed bits, with 1 occurring with probability 
$p\in(0,1)$ and 0 occurring with probability $1-p$. 

We establish a functional equation for the bivariate moment generating function for the two ends of the final interval delivered by the algorithm. Via the method of moments, 
we show that the transmitted message converges in distribution to the standard 
continuous uniform random variable on the interval [0,1]. It is remarkable that
the limiting distribution is the same for all $p$, indicating robustness in the performance  of arithmetic coding across an entire family of bit distributions. The nuance with $p$ appears only in the rate of convergence.   

\bigskip\noindent
{\bf Keywords:}  
            Random structure, 
            robust algorithm,
            coding,       
            combinatorial probability, 
            method of moments.
                        
\bigskip
\noindent{\bf AMS subject classifications:} 

\medskip
\hskip 1cm 
Primary: 	68P30,  %	Coding and information theory (compaction, compression, models of communication, encoding schemes, etc.) (aspects in computer science)
68P20.  	%Information storage and retrieval of data

\hskip 1cm 
   Secondary: 60C05,     % Combinatorial probability
      60F05.                   % Central limit and other weak theorems       
\section{Arithmetic coding}
Entropy coders are lossless compression-recovery
 algorithms that are applied in many data reduction schemes. Huffman coding and arithmetic coding are two such methods that differ significantly. Although both methods try to reach Shannon's fundamental lower limit of lossless compression~\cite{Shannon}, Huffman coding only achieves this in very special cases.

Arithmetic coding is an elegant algorithm that transmits an entire message (possibly a long one) as a single number (a codeword, possibly quite short).
 The method's attractiveness %of the method 
 stems from its unique decipherability to recover the transmitted 
 codeword. The transmission is achieved by two coupled stages. The message is fed into a coder, which operates on its letters transforming them into a codeword. The codeword is sent to a decoder which recovers the message in its original form.    

The concept of arithmetic coding was introduced by Jorma Rissanen in the mid-1970's. 
However, there are no noticeable publications on it until the mid-1980's because the initial publication
was an internal report within IBM and was not widely circulated outside that community. 
The two publications~\cite{Glen,Witten} polarized the algorithm.
The survey~\cite{Said} is widely cited. For a textbook style presentation, see Chapter 6 of~\cite{Hankerson}. For a general background on information theory, we refer the reader 
to~\cite{Cover} or~\cite{Sayood}.

All data are ultimately transmitted as binary strings, so we might as well deal with data in binary form.
Let $n$ be the length (in bits) of a message received by the arithmetic coder. 

Our purpose is to study the probabilistic behavior of the algorithm over a large
number of {\em random} messages to assess an average behavior over a family of bit distributions. The analysis reveals surprising 
robustness---there is not much change in
the limiting behavior across the probability model considered. The nuance comes only
in the rate of convergence.

The message is composed bit by bit. Let us call the $n$th bit received $B_n$.
As a model of randomness we take 
the bits of the message to be independent, and for any $n$, we have
$$\Prob(B_n=1) = p, \qquad\Prob(B_n =0) = 1 - p =: q.$$ 
We  assume the 
algorithm has direct access to the $i$th bit 
$B_i$ of 
the message, for $i=1, \ldots,n$.  
%Hosam: No need for M any more.
For example, with $n=8$, the received message might be $10111101$. By the independence of the bits,
the probability of this string is $p^6 q^2$. 
Initially, before processing any bits, the algorithm takes the codeword to be in the interval
$[X,Y] = [0,1]$. Later $X$ and $Y$ coalesce toward each other, and the length of the interval is $Y-X$. 
The appearance of a 0 as the next bit shrinks the interval from the upper end by
setting $Y=X + q(Y-X)$, and if the next bit is 1, the algorithm  raises $X$ to position
$X + q(X-Y)$.
After all $n$ bits have been processed, we have an interval $[X,Y]$, which one would
expect to be small if the message is long. 
The decoding stage of the algorithm can decipher the message uniquely given any number in the final interval. It is therefore advantageous to use a short codeword, a number in the interval with the fewest possible significant bits via some extra algorithmic steps to achieve the best possible compression rate.
In  some implementations, the codeword is taken to be the middle point $\frct 1 2(X+Y)$.

We present the algorithm in pseudocode in the probabilistic setting discussed: 
 
\bigskip
         {\phantom {X}}\\
      {\phantom {X}} \hskip 1cm $X\gets 0$\\
      {\phantom {X}} \hskip 1cm $Y\gets 1$ \\
      {\phantom {X}} \hskip 1cm {\bf for} $i$ {\bf from} 1 {\bf to} $n$ {\bf do}\\
         {\phantom {X}}  \hskip 2cm{\bf  if} $%B
         B_i = 0$  \\
         {\phantom {X}}          \hskip 3cm {\bf then}  $Y \gets X + q(Y-X)$\\
         {\phantom {X}}        \hskip 3cm {\bf else}  $X \gets X + q(Y-X)$
\begin{example}         
Suppose the message to transmit is 011101, and $p=\frct 13$.  We have the following
sequence of truncations
of the initial interval 
\begin{align*}
0 &\implies \Big[0, \frac 1 3\Big] , \qquad{\rm of\ length}\ \frac 13\\
1 &\implies  \Big[\frac1 9, \frac 1 3\Big] ,\qquad {\rm of\ length}\ \frac 2 9\\
1 &\implies   \Big[\frac 5{27}, \frac 1 3\Big] , \qquad {\rm of\ length} \ \frac 4 {27}\\
1 &\implies   \Big[\frac {19} {81}, \frac 1 3\Big], \qquad {\rm of\ length}\ \frac 8 {81}\\
0 &\implies   \Big[\frac {19} {81}, \frac {65} {243}\Big], \qquad {\rm of\ length}\ \frac 8 {243}\\
1 &\implies   \Big[\frac {179}{729}, \frac {65} {243}\Big], \qquad {\rm of\ length}\ \frac {16}{729}.
\end{align*}

The coder delivers the interval $[\frac {179}{729}, \frac {65} {243}]$, which in binary is
$$[\,0.0011111011\ldots\,\,, 0.0100010010\ldots\,].$$
We can choose any number within this interval as a value for the codeword.
We take the shortest prefix on which the binary representation of the two ends disagree as
the net result,
which in this example is 0.01. A message of length 6 bits is
successfully transmitted as a codeword in two bits, achieving good compression. 

An implementation choosing to transmit the middle point goes for
$$0.01000001101\ldots$$ 
as the codeword, which is infinite. The shortest finite prefix that remains within the final version is taken as the codeword, which in this example is 0.01.
This short codeword can only be used when the decoder knows in advance that there are only two bits.
In practice, multiple encoded messages might appear sequentially in a bit stream.
This requires a subinterval within the code interval.
For the given encoding, the subinterval is $[0.010000,0.010001]$, which is uniquely represented by 0.010000. This is not a compression, which is as expected, as the maximal compression rate with $p=1/3$ is $-\frac 1 3\log_2\frac 1 3 - \frac 2 3 \log_2\frac 2 3\approx 0.92$~\cite{Shannon}.
\end{example}
\section{Notation}
The standard probability notation $\convD$,
$\inprob$ and $\almostsure$ respectively stand for convergences in distribution,
in probability and almost surely.     

For an $r$-by-$r$ matrix, its $r$ eigenvalues $\lambda_1, \ldots, \lambda_r$ are numbered in accordance with decreasing order of their real parts. 
That is, we number the eigenvalues such that
      $$\Re\, \lambda_1 \ge \Re\, \lambda_2 \ge \cdots \ge \Re\, \lambda_r.$$  
Then, $\lambda_1$ is called the {\em principal eigenvalue} and the corresponding
idempotent is called the {\em principal idempotent}. 

A random variable uniformly distributed on $[0,1]$ is denoted by Uniform~$[0,1]$.        
\section{Probabilistic analysis}
For the sake of the analysis, we index the interval produced after processing~$n$ bits.
After $n$ bits have been processed  by the coder, the algorithm narrows down the 
interval $[X_{n-1}, Y_{n-1}]$ to one with lower end $X_n$ and upper 
end $Y_n$. Before receiving any bits, 
the initial interval is $[X_0, Y_0]  = [0,1]$. 

We have a recursive formulation. Let $\field_n$ be the sigma field generated by step $n$ (all the information available by time $n$, such as $X_n$ and $Y_n$).
Note that $B_{n+1}$ is independent of $\field_n$, and we have the conditional distributions:
$$X_{n+1} \given \field_n = \begin{cases}
                    X_n,  &\mbox{with probability}\ q;\\
                    X_n + q(Y_n-X_n), &\mbox{with probability}\ p,
          \end{cases}$$
and  
$$Y_{n+1} \given \field_n=  \begin{cases}
                    X_n + q(Y_n-X_n),  &\mbox{with probability}\ q;\\
                    Y_n, &\mbox{with probability}\ p.
          \end{cases}$$
Toward a characterization of the distribution of the random codeword,
we deal with the bivariate moment generating function of the row vector $(X_n, Y_n)$,
which is namely
\begin{align*}
\phi_{X_n, Y_n} (u,v) &:= \E[e^{uX_n + vY_n }] \\
          &= 1 + \E[X_n]u + \E[Y_n] v\\
                  &\qquad + \frac 1 {2!} \big (\E[X_n^2]\, u^2
                        + 2\, \E[X_nY_n]\, uv + \E[Y_n^2]\,v^2\big) + 
                         \cdots,
\end{align*}                  
a machinery to generate all moments. 

By conditioning on $\field_n$, we write the functional equation
\begin{align}
\phi_{X_{n+1},Y_{n+1}} (u,v) 
        &= \E\big[\E[e^{uX_{n+1} + vY_{n+1}}]\, \big | \, \field_n\big] \nonumber \\
        &=  q \, \E[e^{uX_n+v(X_n +q(Y_n-X_n))}]
                       + p \, \E[e^{u(X_n +q(Y_n - X_n))+ vY_n}\big] \nonumber \\ 
        &=  q \, \E[e^{(u+pv)X_n + qvY_n}]
                       + p \, \E[e^{puX_n +(qu+v)Y_n}\big]\nonumber \\ 
         &=  q\, \phi_{X_n,Y_n} (u+pv,qv)   + p\, \phi_{X_n,Y_n} (pu,qu+v).      
         \label{Eq:functional}                    
\end{align}

\section{The moments} 
Toward the computation of the mixed moment $\E[X^i Y^j]$, for $i\ge 0$ and $j\ge 0$,
we take the $m\mbox{th}=(i+j)\mbox{th}$ derivatives of the functional equation~(\ref{Eq:functional}), that is,  $i$ times with respect to $u$, then again $j$ times with respect to $v$. We evaluate all these expressions at $u=v=0$. We get recurrence relations:

\begin{align*}
\E[X_{n+1}^i Y_{n+1}^j] &=  \frac {\partial^{i+j}} {\partial^i u\, \partial^j v}\, \E[e^{uX_{n+1}
                                 + v Y_{n+1}}]_{u=v=0} \\
              &= q\, \frac {\partial^{i+j}} {\partial^i u\, \partial^j v}\, 
         \E[e^{(u+pv)X_n+ qvY_n}]_{u=v=0} \\
            &\qquad {}+  p\, \frac {\partial^{i+j}} {\partial^i u\, \partial^j v}  
                      \, \E[e^{puX_n+ (qu+v)Y_n}]_{u=v=0} \\
            &= q\, \E\big[X_n^i(pX_n+qY_n)^j\big] 
                                 +  p\, \E\big[(pX_n+qY_n)^i  Y_n^j\big]   \\
             &   =q\, \E\Big[\sum_{k=0}^j X_n^i(pX_n)^k (qY_n)^{j-k} {j \choose k} \Big] \\        
             &\qquad\qquad \qquad {}  +  p\, \E\Big[\sum_{\ell=0}^i (pX_n)^\ell 
                             (qY_n)^{i-\ell} {i \choose \ell}Y_n^j\Big] \\
             &=q\, \sum_{k=0}^j p^k q^{j-k} {j \choose k}\E[X_n^{k+i} Y_n^{j-k}] \\
                   &\qquad\qquad \qquad  +  p\, \sum_{\ell=0}^i 
                  p^\ell q^{i-\ell} {i \choose \ell}\E[X_n^\ell 
                             Y_n^{i+j-\ell}].
\end{align*}

It is convenient to put the moments of order $i+j=m$ in a vector to discern a recurrence relation 
in matric form:
\begin{equation}\begin{pmatrix} \E[X_{n+1}^m] \\ 
                              \E[X_{n+1}^{m-1}Y_{n+1}]\\
                              \vdots\\
                              \E[Y_{n+1}^m]  \end{pmatrix} 
         = \matW_m
                                      \begin{pmatrix}  \E[X_n^m] \\ 
                              \E[X_n^{m-1}Y_n]\\
                              \vdots\\
                              \E[Y_n^m] \end{pmatrix} 
                                          = \matW^{n+1}_m\begin{pmatrix}  0 \\ 
                              0\\
                               \vdots\\   
                               0\\   
                              1\end{pmatrix}, 
         \label{Eq:moments}                      
\end{equation}                              
 where, the matrix $\matW_m$ is $(m+1)$-by-$(m+1)$, and the row corresponding to
 $\E[X_{n+1}^i Y_{n+1}^j]$ is
 \begin{align}
 &\Big[{j\choose j}  p^j q \qquad {j\choose j-1}  p^{j-1} q^2  \qquad \ldots \qquad pq^j {j\choose 1} \qquad p^{i+1} + q^{j+1}\nonumber\\
   &\qquad  \qquad \qquad \qquad \quad   p^iq {i \choose 1}\qquad  \ldots \qquad  p^2q^{i-1}  {i\choose i-1} \qquad pq^i  {i\choose i}\Big] .
 \label{Eq:Wrow}  
 \end{align}
\begin{theorem}
For each $m=1,2,\ldots$, the matrix $\matW_m$ is a regular doubly stochastic matrix.
\end{theorem}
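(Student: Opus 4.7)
The plan is to verify in order the four requirements for a regular doubly stochastic matrix: nonnegativity of every entry, unit row sums, unit column sums, and positivity of some power of $\matW_m$. Positivity and regularity I would read straight off the explicit row in~(\ref{Eq:Wrow}): every off-diagonal entry has the form $\binom{j}{k}p^k q^{j-k+1}$ or $\binom{i}{\ell}p^{\ell+1}q^{i-\ell}$, and the diagonal entry is $p^{i+1}+q^{j+1}$. Since $p\in(0,1)$ forces $p,q>0$ and binomial coefficients are positive, every entry of $\matW_m$ is already strictly positive, so $\matW_m$ itself serves as the required positive power.

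For the row sums I would split the row indexed by $(i,j)$ into the two contributions produced by conditioning on $B_{n+1}$ and apply the binomial theorem twice:
\begin{equation*}
q\sum_{k=0}^{j}\binom{j}{k}p^k q^{j-k} \;+\; p\sum_{\ell=0}^{i}\binom{i}{\ell}p^\ell q^{i-\ell} \;=\; q(p+q)^j + p(p+q)^i \;=\; 1.
\end{equation*}

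The column sum is the step I expect to cost the most thought. Fixing a target column $(a,b)$ with $a+b=m$, I would identify which rows $(i,j)$ deposit mass there: the first term of the recurrence hits $(a,b)$ from rows with $0\le i\le a$ via $k=a-i$, and the second hits it from rows with $a\le i\le m$ via $\ell=a$. After re-indexing with $r=a-i$ and $s=i-a$, the column sum collapses to
\begin{equation*}
q^{b+1}\sum_{r=0}^{a}\binom{b+r}{b}p^r \;+\; p^{a+1}\sum_{s=0}^{b}\binom{a+s}{a}q^s.
\end{equation*}
Rather than pushing this identity through algebraically, I would recognize each sum as a negative binomial partial sum in an auxiliary Bernoulli$(p)$ sequence: the first term is the probability that the $(b+1)$st tail has occurred by trial $a+b+1$, and the second is the probability that the $(a+1)$st head has occurred by trial $a+b+1$. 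These are complementary events on the first $a+b+1$ trials (having fewer than $b+1$ tails is the same as having at least $a+1$ heads), so the two probabilities sum to $1$, and the four checks together give the theorem.
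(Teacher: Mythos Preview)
Your proof is correct. The row-sum and regularity arguments coincide with the paper's, but your column-sum argument takes a genuinely different route. The paper decomposes $\matW_m=\matL_m+\matU_m$ into lower- and upper-triangular pieces, uses Pascal's identity to show that the $j$th and $(j-1)$st column sums of $\matW_m$ agree, and then appeals to the already-established row stochasticity to conclude that every column sum equals~$1$. You instead compute the column sum directly, obtaining
\[
q^{b+1}\sum_{r=0}^{a}\binom{b+r}{b}p^{r}+p^{a+1}\sum_{s=0}^{b}\binom{a+s}{a}q^{s},
\]
and recognize the two terms as the complementary negative-binomial probabilities $\Prob(\text{at least }b{+}1\text{ tails in }m{+}1\text{ trials})$ and $\Prob(\text{at least }a{+}1\text{ heads in }m{+}1\text{ trials})$. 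Your approach is shorter and self-contained (it does not recycle the row-sum fact), and the probabilistic reading is a nice bonus; the paper's approach, by contrast, surfaces the triangular decomposition $\matL_m,\matU_m$ that it goes on to exploit in the subsequent eigenvalue computations.
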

\begin{proof}
Adding up the entries in $\matW_m$ across any row (cf.~(\ref{Eq:Wrow})), the binomial theorem gives us
$$    q\, \sum_{k=0}^j p^k q^{j-k} {j \choose k}
                    +  p\, \sum_{\ell=0}^i 
                  p^\ell q^{i-\ell} {i \choose \ell} = p+q = 1.$$
Therefore, $\matW_m$ is a row-stochastic matrix.            

In order to demonstrate that $\matW_m$ is column-stochastic, we prove that the sum of adjacent columns are the same.
Observe that the matrix $\matW_m$ is a sum of an upper triangular matrix $\matU_m$ and a lower triangular matrix
$\matL_m$, where
$$
(\matL_m)_{ij}=
\tbinom{i}{i-j}p^{i-j}q^{j+1},
\quad
\mbox{and}
\quad
(\matU_m)_{ij}=
\tbinom{m-i}{j-i}p^{m+1-j}q^{j-i},\quad 0\le i,  j\le m;
$$
with the binomial coefficients taken, as usual, to be 0, when the lower index is larger than the upper index.

The sum of the $j$th column of $\matL_m$ is
\begin{align}
\label{eq:sumrowLmj}
&\Big(
\tbinom{j}{0}+
\tbinom{j+1}{1}p+
\tbinom{j+2}{2}p
+\cdots+
\tbinom{m}{m-j}p^{m-j}
\Big)q^{1+j}\nonumber\\
&\qquad =
\Big(
\tbinom{j}{0}+
\tbinom{j+1}{1}p+
\tbinom{j+2}{2}p^2
+\cdots+
\tbinom{m}{m-j}p^{m-j}
\Big)q^{j}\nonumber\\
&\qquad \qquad  {} -\Big(
\tbinom{j}{0}p+
\tbinom{j+1}{1}p^2
+\cdots+
\tbinom{m-1}{m-j-1}p^{m-j}
+\tbinom{m}{m-j}p^{m-j+1}
\Big)q^{j},
\end{align}
where we have used that $q^{1+j}=(1-p)q^j$.
The $p^i$ terms are collected via Pascal's identity 
$\tbinom{r}{s}-\tbinom{r-1}{s-1}=\tbinom{r-1}{s}$.
The right-hand side in~(\ref{eq:sumrowLmj}) is therefore
\begin{equation*}
\left(
\tbinom{j-1}{0}+
\tbinom{j}{1}p+
\tbinom{j+1}{2}p^2+
\cdots+
\tbinom{m}{m-j+1}p^{m-j+1}\right)q^{j}
-\tbinom{m+1}{m-j+1}p^{m-j+1}q^{j}.
\end{equation*}
This is the sum of the $(j-1)$st
column of $\matL_m$ minus $\tbinom{m+1}{m-j+1}p^{m-j+1}q^{j}$.

The sum of the $(j-1)$st
column of $\matU_m$ is calculated similarly:
\begin{align*}
&\Big(
 \tbinom{m}{j-1}q^{j-1}
+\tbinom{m-1}{j-2}q^{j-2}
+\cdots
+\tbinom{m+1-j}{0}
\Big)p^{m+2-j}\\
&\qquad\qquad=\Big(
 \tbinom{m}{j-1}q^{j-1}
+\tbinom{m-1}{j-2}q^{j-2}
+\cdots
+\tbinom{m+1-j}{0}
\Big)p^{m+1-j}\\
&\qquad \qquad \qquad \qquad-\Big(
 \tbinom{m}{j-1}q^{j}
+\tbinom{m-1}{j-2}q^{j-1}
+\cdots
+\tbinom{m+1-j}{0}q
\Big)p^{m+1-j}\\
&\qquad \qquad =\Big(
\tbinom{m}{j}q^{j}
+\cdots
+\tbinom{m-j}{0}\Big)p^{m+1-j}
-\tbinom{m+1}{j}q^{j}p^{m+1-j},
\end{align*}
which is the sum of the $j$th
column of $\matU_m$ minus $\tbinom{m+1}{j}q^{j}p^{m+1-j}$.

Consequently, the sum of the $j$th
column of $\matW_m$ is equal to the sum of the $(j-1)$st
column of $\matW_m$. Since $\matW_m$ is a
row-stochastic matrix, the total sum of all its elements equals  
the number of rows, $m+1$. Therefore, the sum of each column is equal to $1$. We conclude that $\matW_m$ is 
column-stochastic too, 
for all $m\ge 1$. The matrix $\matW_m$ is regular since all its entries are positive.
\end{proof}
\begin{lemma}\label{lem:eigenvectormatrix}
A modal matrix $\matP_m$ of\, $\matU_m$ is
$(\matP_m)_{ij}=\tbinom{m-i}{j-i},$ with the inverse $(\matP_m^{-1})_{ij}=(-1)^{i-j}
\tbinom{m-i}{j-i}.$
\end{lemma}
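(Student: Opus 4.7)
The plan is to verify directly that the $j$th column of $\matP_m$ is an eigenvector of $\matU_m$, and then confirm the proposed inverse by a separate column-by-column check. Since $\matU_m$ is upper triangular, its eigenvalues are read off its diagonal: $(\matU_m)_{jj}=\binom{m-j}{0}p^{m+1-j}q^0=p^{m+1-j}$, for $j=0,1,\ldots,m$. These $m+1$ values are distinct for $p\in(0,1)$, so each eigenspace is one-dimensional and it suffices to exhibit one eigenvector per eigenvalue.

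The core calculation is to show $\matU_m\matP_m=\matP_m\matD_m$, where $\matD_m$ is the diagonal matrix with entries $p^{m+1-j}$. Writing out the $(i,k)$ entry of the left-hand side and reindexing with $s=j-i$, $N=m-i$, $K=k-i$, it reduces to the identity
$$\sum_{s=0}^{K}\binom{N}{s}\binom{N-s}{K-s}\,p^{K-s}q^{s}=\binom{N}{K}.$$
The key observation is the trinomial revision
$$\binom{N}{s}\binom{N-s}{K-s}=\binom{N}{K}\binom{K}{s},$$
which factors $\binom{N}{K}$ out of the sum and leaves $(p+q)^K=1$ by the binomial theorem. This is clean enough that I do not expect it to be the obstacle.

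For the inverse, I would run the same reduction on $(\matP_m\matP_m^{-1})_{ik}$. With $s=j-i$, the sum becomes
$$(-1)^{i-k}\sum_{s=0}^{K}(-1)^{s}\binom{N}{s}\binom{N-s}{K-s},$$
and the same trinomial revision turns the inner sum into $\binom{N}{K}\sum_{s=0}^{K}(-1)^s\binom{K}{s}$, which equals $\binom{N}{K}(1-1)^K$. This vanishes for $K>0$ and is $1$ for $K=0$ (i.e.\ $i=k$), so $\matP_m\matP_m^{-1}=I_{m+1}$.

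The only step that requires any care is the trinomial revision; everything else is bookkeeping and a binomial-theorem evaluation. Once the eigenvector identity is in hand, distinctness of the eigenvalues guarantees that $\matP_m$ is indeed a modal matrix (its columns form a basis of eigenvectors), so the result follows. I would not expect any real obstacle here — the main risk is index slippage, so I would be explicit about the ranges of summation (the summand vanishes outside $i\le j\le k$ because of the $\binom{m-i}{j-i}$ and $\binom{m-j}{k-j}$ factors).
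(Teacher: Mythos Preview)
Your proposal is correct and follows essentially the same route as the paper: both verify $(\matU_m\matP_m)_{ij}=p^{m+1-j}(\matP_m)_{ij}$ and $(\matP_m\matP_m^{-1})_{ij}=\delta_{ij}$ by direct computation, reducing each sum via the identity $\binom{N}{s}\binom{N-s}{K-s}=\binom{N}{K}\binom{K}{s}$ (which the paper derives by expanding factorials rather than naming as ``trinomial revision'') and then invoking the binomial theorem. Your added remark that the diagonal eigenvalues $p^{m+1-j}$ are distinct for $p\in(0,1)$, ensuring $\matP_m$ really is a modal matrix, is a nice point the paper leaves implicit.
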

\begin{proof}
We verify that
\begin{align*}
(\matU_m\matP_m)_{ij}&=\sum_{k=i}^{j}
(\matU_m)_{ik}(\matP_m)_{kj}\\
&=\sum_{k=i}^{j}
\tbinom{m-i}{k-i}p^{m+1-k}\,q^{k-i}\,
\tbinom{m-k}{j-k}\\
&=\sum_{k=i}^{j}
\tfrac{(m-i)!}{(k-i)!\,(m-k)!}\, p^{m+1-k}\,q^{k-i}\,
\tfrac{(m-k)!}{(j-k)!\,(m-j)!}\\
&=
\tfrac{(m-i)!}{(j-i)!\,(m-j)!}
\sum_{k=i}^{j}
p^{m+1-k}\,q^{k-i}
\tfrac{(j-i)!}{(j-k)!\,(k-i)!}\\
&=
\tbinom{m-i}{j-i}
\sum_{k=i}^{j}
p^{m+1-k}\,q^{k-i}
\tbinom{j-i}{k-i}\\
&=
\tbinom{m-i}{j-i}
\sum_{\ell=0}^{j-i}
p^{m-i+1-\ell}q^{\ell}
\tbinom{j-i}{\ell}\\
&=
\tbinom{m-i}{j-i}
p^{m-j+1}(p+q)^{j-i}\\
&=p^{m-j+1}(\matP_m)_{ij}.
\end{align*}
We also have
\begin{align*}
(\matP_m\matP_m^{-1})_{ij}
&=\sum_{k=i}^j(\matP_m)_{ik}(\matP_m^{-1})_{kj}\\
&=\sum_{k=i}^j\tbinom{m-i}{k-i}
 \tbinom{m-k}{j-k}(-1)^{j-k}\\
&=\sum_{k=i}^j
\tfrac{(m-i)!}{(k-i)!\,(m-k)!} \times
\tfrac{(m-k)!}{(j-k)!\,(m-j)!}
(-1)^{j-k}\\
&=\tfrac{(m-i)!}{(m-j)!\, (j-i)!}
\sum_{k=i}^j
\tfrac{(j-i)!}{(j-k)!\,(k-i)!}
(-1)^{j-k}\\
&=\tbinom{m-i}{m-j}
\sum_{k=i}^j
\tbinom{j-i}{j-k}
(-1)^{j-k}\\
&=\delta_{ij},
\end{align*}
where  $\delta_{ij}$ is Kronecker's delta function that assumes the value 1 only when its two arguments are the same; otherwise it is 0.
\end{proof}
\begin{lemma}
Let $\matP_m$ be the 
modal matrix of\, $\matU_n$ from Lemma~\ref{lem:eigenvectormatrix}.
Then
$$(\matP_m^{-1}\matL_m\matP_m)_{ij}=\tbinom ijp^{i-j}q^{m+1-i}.$$ 
\end{lemma}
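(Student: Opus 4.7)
The plan is to prove the equivalent matrix identity $\matL_m\matP_m=\matP_m\matT_m$, where $\matT_m$ denotes the matrix with entries $(\matT_m)_{ij}=\binom{i}{j}p^{i-j}q^{m+1-i}$ claimed in the statement (under the convention $\binom{i}{j}=0$ for $j>i$). Since $\matP_m$ is invertible by Lemma~\ref{lem:eigenvectormatrix}, this identity is equivalent to the asserted formula for $\matP_m^{-1}\matL_m\matP_m$, but it sidesteps the triple sum and the alternating signs that a direct attack on $\matP_m^{-1}\matL_m\matP_m$ would introduce.

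First, I would expand
\[
(\matL_m\matP_m)_{ij}=\sum_{k}\binom{i}{k}p^{i-k}q^{k+1}\binom{m-k}{j-k},
\]
rewrite $\binom{m-k}{j-k}=[y^{j-k}](1+y)^{m-k}$, and interchange the sum with the coefficient-extraction operator. The resulting $k$-sum becomes, via the binomial theorem, a power of $p+qy/(1+y)$, which collapses to $(p+y)/(1+y)$ thanks to $p+q=1$. After cancellation, one obtains
\[
(\matL_m\matP_m)_{ij}=q\,[y^j](1+y)^{m-i}(p+y)^i.
\]
The same style of calculation applied to $(\matT_m)_{kj}=\binom{k}{j}p^{k-j}q^{m+1-k}$, using $\binom{k}{j}=[y^j](1+y)^k$, yields
\[
(\matP_m\matT_m)_{ij}=p^{i-j}q\,[y^j](1+y)^i(1+py)^{m-i}.
\]

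The remaining step is to recognize the two coefficient expressions as equal. Writing $(p+y)^i=p^i(1+y/p)^i$ and performing the substitution $y=pu$ transforms $(1+y)^{m-i}(p+y)^i$ into $p^i(1+pu)^{m-i}(1+u)^i$, while the operator $[y^j]$ picks up a factor $p^{-j}$, so the two right-hand sides match on the nose.

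The main obstacle is the careful bookkeeping of summation ranges during the generating-function manipulation, because $\matL_m$ is lower triangular while $\matP_m$ is upper triangular: on one side of the identity the dummy index lives in $[0,\min(i,j)]$ and on the other in $[\max(i,j),m]$. The coefficient-extraction formalism handles both ranges uniformly once the convention $\binom{a}{b}=0$ for $b<0$ or $b>a$ is invoked, after which the remaining algebra amounts to applications of the binomial theorem, with $p+q=1$ supplying the telescoping that reduces each side to a compact $[y^j]$-coefficient.
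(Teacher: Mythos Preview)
Your argument is correct, and like the paper you reduce the claim to the identity $\matL_m\matP_m=\matP_m\matT_m$, but from that point the two proofs diverge. The paper proceeds by induction on $m$: it writes each of $\matL_m$, $\matP_m$, $\matT_m$ as a $2\times 2$ block matrix built from its $(m-1)$-counterpart, multiplies out both sides, and matches blocks, invoking along the way that $\big[\tbinom m0,\ldots,\tbinom mm\big]$ is a left eigenvector of $\matT_m$ for the eigenvalue $q$. Your route is a direct entrywise computation via coefficient extraction: you collapse each matrix product to a single $[y^j]$-coefficient using the binomial theorem and $p+q=1$, and then identify the two coefficients by the rescaling $y\mapsto pu$. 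Your approach is shorter and self-contained (no induction, no auxiliary eigenvector fact), at the cost of requiring fluency with the $[y^j]$ formalism; the paper's block-recursive argument, on the other hand, makes the structural relationship between $\matW_m$ and $\matW_{m-1}$ visible, which fits naturally with the paper's later use of the eigenvalue list $p^{k}+q^{k}$.
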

\begin{proof} Let $\matT_m=\matP_m^{-1}\matL_m\matP_m$. It is sufficient to show that $\matP_m\matT_m=\matL_m\matP_m$. We will prove this by induction. The lemma is trivially true for $m=1$. Assume that
$\matP_{m-1}\matT_{m-1}=\matL_{m-1}\matP_{m-1}$.
Observe that{\footnote{In the display, each zero in boldface is a matrix or a vector of zeros
of appropriate size.}}
$$
\matL_m=\left(
\begin{array}{c|c}
q & {\bf 0} \\\hline
{\bf 0} &q\matL_{m-1}
\end{array}
\right)+\left(
\begin{array}{c|c}
{\bf 0}  &  0  \\\hline
p\matL_{m-1}& {\bf 0} 
\end{array}
\right),
$$
$$\matP_m=\left(
\begin{array}{c|c}
1 & \tbinom m1\cdots\tbinom mm \\\hline
{\bf 0} &\matP_{m-1}
\end{array}
\right)=\left(
\begin{array}{c|c}
\matP_{m-1}&{\bf 0} \\\hline
{\bf 0} &1
\end{array}
\right)+\left(
\begin{array}{c|c}
{\bf 0} &\matP_{m-1} \\\hline
0&{\bf 0} 
\end{array}
\right),$$
and
$$\matT_m=\left(\begin{array}{c|c}
    q^{m+1}&{\bf 0}   \\\hline
     {\bf 0} &\matT_{m-1} 
\end{array}\right)
+\left(\begin{array}{c|c}
     {\bf 0} & 0  \\\hline
p\matT_{m-1} &{\bf 0} 
\end{array}\right).
$$
The product of $\matL_m$ with $\matP_m$ is
\begin{align*}
    \matL_m\matP_m&=
q\left(
\begin{array}{c|c}
1&\tbinom m1\cdots\tbinom mm \\\hline
{\bf 0} &\matL_{m-1}\matP_{m-1}
\end{array}
\right)
+
\left(
\begin{array}{c|c}
{\bf 0} &0 \\\hline
p\matL_{m-1}\matP_{m-1}&{\bf 0} 
\end{array}
\right)\\
&\qquad\qquad{}+
\left(
\begin{array}{c|c}
0&{\bf 0} \\\hline
{\bf 0}&p\matL_{m-1}\matP_{m-1}
\end{array}
\right)\\
&=\left(
\begin{array}{c|c}
q&q\tbinom m1\cdots q\tbinom mm \\\hline
{\bf 0}&\matP_{m-1}\matT_{m-1}
\end{array}
\right)+
\left(
\begin{array}{c|c}
{\bf 0}&0 \\\hline
p\matP_{m-1}\matT_{m-1}&{\bf 0}
\end{array}
\right).
\end{align*}
On the other hand, we have
\begin{align*}
    \matP_m\matT_m&=\left(
    \begin{array}{c|c}
        q & q\tbinom m1\cdots q\tbinom mm \\\hline
         {\bf 0} & \matP_{m-1}\matT_{m-1}
    \end{array}
    \right)+
    \left(
    \begin{array}{c|c}
       {\bf 0} & 0 \\\hline
       p\matP_{m-1}\matT_{m-1} &  {\bf 0} 
    \end{array}
    \right).
\end{align*}
We have used that $\left[\tbinom m0,\tbinom m1,\ldots,\tbinom mm\right]$ is a left eigenvector of $\matT_m$ for the eigenvalue $q$. In fact, $\left[\tbinom m0,\tbinom m1,\ldots,\tbinom mm\right]\matP_m^{-1}=[1,0,\ldots,0]$ follows from Lemma~\ref{lem:eigenvectormatrix}. Moreover, observe that $[1,0,\ldots,0]$ is a left eigenvector of $\matL_m$ with respect to its eigenvalue $q$ and finally, $[1,0,\ldots,0]\, \matP_m=\left[\tbinom m0,\ldots,\tbinom mm\right]$. By induction on $m$, we conclude that $\matL_m\matP_m=\matP_m\matT_m$ for all $m\geq 1$.
\end{proof}
\begin{cor}
\label{Cor:eigenvalues}
The eigenvalues of\, $\matW_m$ are 
$1,p^2+q^2,p^3+q^3,\ldots,p^{m+1}+q^{m+1}$.
\end{cor}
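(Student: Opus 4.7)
The plan is to exploit the decomposition $\matW_m = \matU_m + \matL_m$ used in the proof of the doubly-stochastic theorem and conjugate by the modal matrix $\matP_m$ of Lemma~\ref{lem:eigenvectormatrix}. Since similarity commutes with matrix addition,
$$\matP_m^{-1}\matW_m\matP_m \;=\; \matP_m^{-1}\matU_m\matP_m \;+\; \matP_m^{-1}\matL_m\matP_m,$$
so the task reduces to identifying the structure of each summand on the right-hand side.

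First I read off from the computation inside the proof of Lemma~\ref{lem:eigenvectormatrix} the stronger fact $(\matU_m\matP_m)_{ij}=p^{m+1-j}(\matP_m)_{ij}$. This means $\matU_m\matP_m=\matP_m\,\mathrm{diag}(p^{m+1},p^{m},\ldots,p)$, so $\matP_m^{-1}\matU_m\matP_m$ is actually a diagonal matrix with $(j,j)$-entry equal to $p^{m+1-j}$. Next, from the lemma immediately preceding the corollary, $\matP_m^{-1}\matL_m\matP_m$ has entries $\binom{i}{j}p^{i-j}q^{m+1-i}$; in particular, it is lower triangular, with $(i,i)$-entry equal to $\binom{i}{i}p^{0}q^{m+1-i}=q^{m+1-i}$.

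Adding the two similar matrices, $\matP_m^{-1}\matW_m\matP_m$ is lower triangular, and its diagonal entries are $p^{m+1-i}+q^{m+1-i}$ for $i=0,1,\ldots,m$. Since similarity preserves the spectrum and the eigenvalues of a triangular matrix are exactly its diagonal entries, the eigenvalues of $\matW_m$ are
$$\{\,p^{m+1}+q^{m+1},\ p^{m}+q^{m},\ \ldots,\ p^{2}+q^{2},\ p+q\,\} \;=\;\{\,1,\,p^2+q^2,\ldots,p^{m+1}+q^{m+1}\,\},$$
which is the claim.

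There is no serious obstacle; the preceding two lemmas do all of the real work. The one point that deserves a careful check is that $\matP_m^{-1}\matU_m\matP_m$ is genuinely diagonal rather than merely upper triangular---but this is immediate because the scalar $p^{m+1-j}$ produced inside the proof of Lemma~\ref{lem:eigenvectormatrix} depends only on the column index $j$, exhibiting each column of $\matP_m$ as a right eigenvector of $\matU_m$. Once that is noted, the triangularity of the sum and the identification of its diagonal complete the argument.
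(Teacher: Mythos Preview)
Your proof is correct and is exactly the argument the paper intends: the corollary is stated without proof because it follows immediately from the two preceding lemmas by conjugating $\matW_m=\matU_m+\matL_m$ with $\matP_m$, obtaining a lower-triangular matrix whose diagonal entries are $p^{m+1-i}+q^{m+1-i}$. Your observation that $(\matU_m\matP_m)_{ij}=p^{m+1-j}(\matP_m)_{ij}$ already makes $\matP_m^{-1}\matU_m\matP_m$ diagonal (not merely upper triangular) is the right reading of the proof of Lemma~\ref{lem:eigenvectormatrix}.
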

\begin{lemma}
\label{Lm:perpendicular}
    Let $\matA$ be an $n$-by-$n$ doubly stochastic  matrix. 
    Then, $\matA$ has
    a simple (multiplicity 1) principal eigenvalue $\lambda_1 =1$, and a principal eigenvector 
    $\mathbf{e}$ of all ones.
   The eigenvectors
    for the eigenvalues $\lambda\not =1$ are perpendicular to $\mathbf{e}$.
\end{lemma}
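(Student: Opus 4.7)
The plan is to dispatch the three assertions of the lemma in their natural order: first that $\mathbf{e}$ is a right eigenvector for eigenvalue $1$, then that every eigenvector for $\lambda\ne 1$ is perpendicular to $\mathbf{e}$, and finally that $\lambda_1=1$ is simple.

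The first step is essentially a restatement of the hypotheses. Row-stochasticity gives $\matA\mathbf{e}=\mathbf{e}$, so $\mathbf{e}$ is a right eigenvector of $\matA$ for eigenvalue $1$. Column-stochasticity gives $\mathbf{e}^T\matA=\mathbf{e}^T$, so $\mathbf{e}$ is simultaneously a left eigenvector for $1$. Any stochastic matrix has spectral radius $1$ (by Gerschgorin applied to the rows), so this $1$ is genuinely the principal eigenvalue in the sense of the paper.

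For the orthogonality claim, the idea is the standard biorthogonality between right and left eigenvectors for distinct eigenvalues. Let $\mathbf{v}$ be any right eigenvector with $\matA\mathbf{v}=\lambda\mathbf{v}$ and $\lambda\ne 1$. Evaluating the scalar $\mathbf{e}^T\matA\mathbf{v}$ two ways---grouping as $(\mathbf{e}^T\matA)\mathbf{v}=\mathbf{e}^T\mathbf{v}$ versus $\mathbf{e}^T(\matA\mathbf{v})=\lambda\,\mathbf{e}^T\mathbf{v}$---gives $(\lambda-1)\mathbf{e}^T\mathbf{v}=0$, hence $\mathbf{e}^T\mathbf{v}=0$. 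This is a three-line mechanical computation.

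The main obstacle will be the \emph{simplicity} of $\lambda_1=1$, because double stochasticity alone is not sufficient (the identity matrix is a counterexample). I would either sharpen the hypothesis to regularity in the sense already used for $\matW_m$ in Theorem~1, or equivalently invoke the Perron--Frobenius theorem for nonnegative primitive matrices, which immediately yields that $1$ is a simple eigenvalue with a strictly positive eigenvector; by the first step this eigenvector must be a scalar multiple of $\mathbf{e}$. Note that boundedness $\|\matA^k\|_\infty\le 1$ already precludes a Jordan block of size $\ge 2$ at $\lambda=1$, so only the geometric multiplicity really requires the primitivity input; this is the single place where more than elementary linear algebra must be brought in. Since the lemma is ultimately applied to the strictly positive matrix $\matW_m$ of~(\ref{Eq:moments}), either reading is harmless for the paper.
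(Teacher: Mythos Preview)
Your argument is essentially the paper's own: it, too, invokes Perron--Frobenius for the simplicity of $\lambda_1=1$ and then computes $\mathbf{e}^T\matA\mathbf{x}$ two ways (using column-stochasticity for $\mathbf{e}^T\matA=\mathbf{e}^T$) to obtain $(1-\lambda)\mathbf{e}^T\mathbf{x}=0$. You are in fact more careful than the paper in flagging that bare double stochasticity does not force simplicity (the identity matrix), and that regularity/primitivity---which $\matW_m$ satisfies since all its entries are positive---is the hypothesis actually needed; the paper silently relies on this when it cites Perron--Frobenius.
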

\begin{proof}
The assertion that the principal eigenvalue is simple and of value 1 follows from  Frobenius-Perron theorem~\cite{Frobenius,Perron}; see also~\cite{Horn}, Section 8.4. Consequently, $\bf e$ is a vector of all ones.

Let $\mathbf{x}$ be an eigenvector for an eigenvalue $\lambda\not =1$. Then, 
$\mathbf{e}^T \matA\mathbf{x}=\lambda\mathbf{e}^T\mathbf{x}$. On the other hand, $\matA$ is doubly stochastic, 
and $\mathbf{e}^T \matA\mathbf{x}=\mathbf{e}^T \mathbf{x}$. Therefore, we have 
$(1-\lambda) \mathbf{e}^T\mathbf{x}=0$. Since $\lambda\neq1$, we conclude that $\mathbf{e}^t\mathbf{x}=0$.
\end{proof}
\begin{cor}
 \label{Cor:idempotents}     
    The principal idempotent of $\matW_m$ is the matrix $\frac 1{m+1}{\bf J}_m$, where ${\bf J}_m$ is the $(m+1)$-by-$(m+1)$ matrix with all elements equal to $1$.
\end{cor}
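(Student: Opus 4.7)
The plan is to invoke the spectral resolution of $\matW_m$. By Corollary~\ref{Cor:eigenvalues}, the eigenvalues are $\lambda_k = p^k + q^k$ for $k=1,2,\ldots,m+1$, with $\lambda_1=1$. Since $p,q\in(0,1)$ we have $p^{k+1}+q^{k+1}=p\cdot p^k+q\cdot q^k<p^k+q^k$ for every $k\ge 1$, so the eigenvalues are pairwise distinct and $\matW_m$ is diagonalizable with spectral decomposition
$$\matW_m=\sum_{k=1}^{m+1}\lambda_k\, E_k,$$
in which each idempotent $E_k$ is the rank-one projector onto the $\lambda_k$-eigenspace along the span of the remaining eigenspaces.

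Next I would pin down the principal eigenspace on both sides. Theorem~4.1 established that $\matW_m$ is doubly stochastic, so Lemma~\ref{Lm:perpendicular} gives that $\mathbf{e}=(1,1,\ldots,1)^T$ spans the right $1$-eigenspace. Applying the same lemma to $\matW_m^T$ (which is equally doubly stochastic) shows that $\mathbf{e}$ also spans the left $1$-eigenspace, i.e., $\mathbf{e}^T\matW_m=\mathbf{e}^T$. Since $\mathbf{e}^T\mathbf{e}=m+1$, the rank-one projector onto $\mathrm{span}(\mathbf{e})$ with null space equal to $\mathbf{e}^{\perp}$ is
$$E_1=\frac{\mathbf{e}\,\mathbf{e}^T}{\mathbf{e}^T\mathbf{e}}=\frac{1}{m+1}\mathbf{J}_m.$$

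To certify that this is indeed the principal idempotent in the spectral decomposition, I would verify the three defining properties directly: $E_1^2=E_1$ (immediate from $\mathbf{J}_m^2=(m+1)\mathbf{J}_m$); $\matW_m E_1=E_1\matW_m=E_1$ (using $\matW_m\mathbf{e}=\mathbf{e}$ and $\mathbf{e}^T\matW_m=\mathbf{e}^T$); and $E_1 E_k=0$ for $k\ne 1$ (because every right eigenvector for $\lambda_k\ne 1$ is annihilated by $\mathbf{e}^T$ by Lemma~\ref{Lm:perpendicular}, so $E_1$ kills the range of $E_k$). These three checks force $E_1$ to be the spectral idempotent attached to $\lambda_1=1$.

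The only real subtlety is confirming simplicity of $\lambda_1=1$ and pairwise distinctness of the $\lambda_k$, which is what legitimizes the spectral decomposition and the rank-one description of $E_1$; after that, Lemma~\ref{Lm:perpendicular} does all the work and the identification of $E_1$ with $\frac{1}{m+1}\mathbf{J}_m$ is essentially a matter of normalization.
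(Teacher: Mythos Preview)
Your proposal is correct and follows essentially the same route the paper has in mind: the corollary is stated without proof precisely because it is the immediate consequence of Lemma~\ref{Lm:perpendicular} applied to the doubly stochastic matrix $\matW_m$, and your write-up is simply the explicit unpacking of that implication. Your extra observation that the eigenvalues $p^k+q^k$ are strictly decreasing (hence distinct) is a clean way to justify diagonalizability and the rank-one nature of $E_1$; the paper instead relies on Perron--Frobenius (via Lemma~\ref{Lm:perpendicular}) for the simplicity of $\lambda_1=1$, but both arguments land on the same formula $E_1=\mathbf{e}\mathbf{e}^T/(m+1)=\tfrac{1}{m+1}\mathbf{J}_m$.
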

The strategy toward an exact or asymptotic representation is to evaluate $\matW_m^n$ via its
eigenvalues and eigenvectors.
For moments of low order (small $m$), such as the first and second, 
the matrix $\matW_m$ is of low dimensions and we can find exact forms. For higher 
moments (larger $m$), the exact computation of the eigenvalues and idempotents
is forbidding in view of the known combinatorial explosion. For higher moments we cut through
computation by focusing only on the asymptotics.                                            
\subsection{Exact and asymptotic means}   
For $m=1$, 
the matrix $\matW_1$ is a simple 2-by-2 matrix and we have the exact explicit form 
$$\begin{pmatrix} 
         \E[X_n ] \\ 
         \E[Y_n]  
    \end{pmatrix} 
         = \begin{pmatrix} p^2+q &pq \\
                             pq &p+q^2\end{pmatrix}^n   \begin{pmatrix} 0\\ 1 \end{pmatrix}. $$

To get an explicit form, we represent the solution in terms of powers of the eigenvalues and the idempotents of $\matW_1$ to get
$$\matW_1^n 
    = \lambda_1^n {\cal E}_1 + \lambda_2^n {\cal E}_2.$$ 
By Corollary~\ref{Cor:eigenvalues}, the two eigenvalues of $\matW_1$ are
$$\lambda_1 = 1, \qquad \lambda_2 = 1- 2pq = p^2 +q^2\le 1;$$
and from Lemma~\ref{Lm:perpendicular} and Corollary~\ref{Cor:idempotents}, 
the corresponding idempotents are
$${\cal E}_1 = \frac 1 2 \begin{pmatrix}1 &1\\
                                               1 &1 
                      \end{pmatrix} , \qquad {\cal E}_2 = \frac 1 2 
                      \begin{pmatrix}1 &-1\\
                                               -1 &1\end{pmatrix}.  $$                                          
It is quite remarkable that the idempotents are independent of $p$.  
Using these calculations, we get
\begin{align*}
\begin{pmatrix} \E[X_n] \\ \E[Y_n]  \end{pmatrix} 
    &= \bigg(1^n \times \frac 1 2    \begin{pmatrix}1 &1\\
                                               1 &1 
                      \end{pmatrix} +\frac 12  (1-2pq)^n  \begin{pmatrix}
                                        1 &-1\\
                                               -1 &1
                       \end{pmatrix}\bigg)                \begin{pmatrix} 0\\ 1 \end{pmatrix}.     \\            
    &= \frac 1 2 \begin{pmatrix}  
             1 -  (1-2pq)^n\\
             1 + (1-2pq)^n \end{pmatrix}.                 
\end{align*}          

As $n\to \infty$, the codeword is any number in the  interval 
$$\frac 1 2 \big(1  - (1-2pq)^n , 1 +( 1
       - 2pq)^n \big).$$
In implementations that take the middle point
the average coded message is $0.5$.
It is intriguing that the average value of the codeword is the same for all $p\in (0,1)$. 

However, $p$ governs the rate of convergence as we argue next. 
The length of the interval delivered by the arithmetic coder is $L(n) =  (2p^2 - 2p + 1)^n$. 
The function $2p^2 - 2p + 1$ is parabolic and symmetric around  $p=\frct12$ and opens upwards.
So, the average length of the interval is $\frct 1 2^n$ at $p = \frct 1 2$.  As~$p$ moves away from $\frct 1 2$, the average interval gets wider, but still its length is exponentially small. For example, at $p= \frct 1 3$, the length of the interval is $(\frct 5 9)^n$. 

The exponential smallness of the average interval length produced by the 
arithmetic coder requires computations of very small numbers pushing the algorithm into underflow errors. These errors have been reported as a drawback of the performance of arithmetic coding, and remedies have been proposed in~\cite{Chang}.
\subsection{The second moments and concentration laws}      
For $m=2$, we have  $$\matW_2 := \begin{pmatrix} p^3+q &2p^2q &pq^2\\
                                      pq &p^2+q^2&pq\\
                                      p^2q&2pq^2&p+q^3
                                      \end{pmatrix}.  $$  
                                      
By Corollary~\ref{Cor:eigenvalues}                    
the matrix $\matW_2$ has the three eigenvalues
$$\lambda_1 = 1, \qquad \lambda_2 = 1 - 2pq ,\qquad \lambda_3 =   1 - 3pq .$$
Note that $\lambda_2$ and $\lambda_3$ are both real-valued with magnitude less than 1.
Hence, the dominant asymptotics in the expression\footnote{The idempotents ${\cal E}_1$
and  ${\cal E}_2$ of $\matW_2$ different for various $m$. We are only reusing the notation to avoid double indexing.}  
$$\begin{pmatrix} \E[X_n^2] \\ 
                              \E[X_nY_n]\\
                              \E[Y_n^2]  \end{pmatrix} 
    = \lambda_1^n {\cal E}_1 + \lambda_2^n {\cal E}_2 +\lambda_3^n {\cal E}_3 .$$ 
come from $\lambda_1^n$ and its associated idempotent (see 
Corollary~\ref{Cor:idempotents}):
$${\cal E}_1 = \frac 1 3
                    \begin{pmatrix} 1 & 1&1 \\ 
                                              1 & 1&1 \\ 
                                              1 & 1&1 \\  \end{pmatrix}.$$ 
As $n \to 1$, we get\footnote{We use the notation ${\bf  \Theta} (g(n))$ for a vector of the appropriate dimensions (determined by the context), with all its entries being $ \Theta(g(n))$ in the usual sense
of exact order of growth: One says that a function $h(n)$ is of exact order $g(n)$, written $h(n) =  \Theta(g(n))$, if there are two positive real numbers $C_1$ and $C_2$ and a 
natural number~$n_0$, such that $C_1 |g(n)| \le |h(n)|\le C_2|g(n)|$, for all $n\ge n_0$.}
$$\begin{pmatrix}
  \E[X_n^2] \\ 
  \E[X_nY_n]\\
  \E[Y_n^2]
\end{pmatrix}
= \lambda_1 ^n {\cal E}_1
\begin{pmatrix}
  0\\
  0\\
  1
\end{pmatrix}
+ {\bf  \Theta} (\lambda_2^n) = \frac 1 3
\begin{pmatrix}
  1 \\
  1 \\
  1
\end{pmatrix}
+ {\bf \Theta} (\lambda_2^n). $$
We thus obtain the asymptotic variance-covariance matrix
$$\frac 1{12 }\begin{pmatrix}
                              1 &1\\ 
                              1 &1
              \end{pmatrix} +   {\bf  \Theta} (\lambda_2^n).$$ 
Note again that the limiting covariance matrix is independent of $p$.                                                       
The exponentially small rate of convergence in the variances and the covariance 
are suggestive of a strong concentration around the mean. Indeed, we have
\begin{align*}
\V[L_n] &= \V[Y_n -X_n] \\
            &= \V[Y_n] + \V[X_n] -2\, \Cov[X_n, Y_n]\\
            &= \Big(\frac 1 {12}  +   \Theta (\lambda_2^n)\Big)  
              +\Big(\frac 1 {12}  
             +    \Theta (\lambda_2^n)\Big)  -  2\Big(\frac 1 {12}  +    \Theta (\lambda_2^n)\Big)\\
            &=   \Theta (\lambda_2^n).   
\end{align*}
With $\lambda_2 < 1$, Chebyshev's inequality gives (for all $\varepsilon>0)$
$$\Prob(L_n > \varepsilon) \le \frac {\V[L_n]} {\varepsilon^2} =  \Theta (\lambda_2^n) \to 0, \qquad \mbox{as\ } n \to\infty,$$
and we have $L_n\,\inprob\, 0$. 

The exponentially fast rate of convergence allows us to strengthen the result. Summing
all the probabilities, for any $\varepsilon > 0$, we get
$$\sum_{n=0}^\infty\Prob(L_n > \varepsilon) \le  \sum_{n=0}^\infty
             \Theta (\lambda_2^n) < \infty.$$
By the first Borel-Cantelli lemma~\cite{Billingsley}, we can assert that, for any $\varepsilon>0$, we have
 $$\sum_{n=0}^\infty\Prob(L_n > \varepsilon \ \ \mbox{infinitely often}) =0,$$
 which is an equivalent characterization of the convergence $L_n\almostsure 0$. 
\section{The limiting distribution of the codeword} 
We have enough information on the asymptotic moments to execute the method of moments toward
specifying a limit distribution. 
\begin{lemma}
\label{Lm:Wmn}
For any $m\ge 0$, the matrix $\matW_m^n$ has the asymptotic representation
$$\matW_m^n = \frac 1 {m+1}\, {\bf J}_m + {\bf  \Theta}\big((1-2pq)^n\big),$$
where ${\bf J}_m$ is an $(m+1)$-by-$(m+1)$ matrix of all ones.
\end{lemma}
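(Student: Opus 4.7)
My plan is to write down the spectral decomposition of $\matW_m$ explicitly and read off both pieces of the claimed asymptotic expansion from the two preceding corollaries.

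First I would verify that $\matW_m$ is diagonalizable. By Corollary~\ref{Cor:eigenvalues} its eigenvalues are $\lambda_k = p^k + q^k$ for $k = 1, \ldots, m+1$, and because $p, q \in (0,1)$, the map $k \mapsto p^k + q^k$ is strictly decreasing in $k$ (its derivative $p^k \ln p + q^k \ln q$ is strictly negative). Hence the $m+1$ eigenvalues are pairwise distinct, so $\matW_m$ admits a spectral decomposition
\[
\matW_m^n = \sum_{k=1}^{m+1} \lambda_k^n \, {\cal E}_k,
\]
with rank-one idempotents ${\cal E}_k$ independent of $n$.

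Next I would peel off the principal term. By Corollary~\ref{Cor:idempotents}, ${\cal E}_1 = \frac{1}{m+1} {\bf J}_m$, and since $\lambda_1^n = 1$ for every $n$, the $k=1$ summand contributes exactly $\frac{1}{m+1} {\bf J}_m$, which is the stated leading term. It then remains to bound the remainder $\sum_{k=2}^{m+1} \lambda_k^n \, {\cal E}_k$. Among the subdominant eigenvalues, $\lambda_2 = p^2 + q^2 = 1 - 2pq$ is the largest, and every $\lambda_k$ with $k \ge 3$ is strictly smaller and positive. Since each ${\cal E}_k$ is a fixed matrix, entrywise the remainder is bounded by a constant (depending on $p$ and $m$, but not $n$) times $(1-2pq)^n$, giving the upper half of the $\Theta$ estimate. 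The matching lower bound is carried by the dominant term $(1-2pq)^n {\cal E}_2$, which overwhelms the deeper terms for large $n$ because $\lambda_k / \lambda_2 \to 0$ for each $k \ge 3$.

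The step I expect to be the main obstacle is making the entrywise $\Theta$ claim rigorous at \emph{every} position, which is equivalent to showing that no entry of ${\cal E}_2$ vanishes. A direct route is to produce explicit left and right eigenvectors of $\matW_m$ for $\lambda_2$, exploiting the lower-triangular reduction $\matP_m^{-1} \matL_m \matP_m = \matT_m$ from the two preceding lemmas; the (suitably normalized) outer product of these vectors is ${\cal E}_2$. If only an $O$-estimate on the error term is needed in the ensuing method-of-moments argument, this delicate verification can be bypassed altogether.
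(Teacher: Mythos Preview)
Your approach is essentially the same as the paper's: expand $\matW_m^n$ in idempotents, identify the leading term via Corollary~\ref{Cor:idempotents}, and bound the tail by the second eigenvalue from Corollary~\ref{Cor:eigenvalues}. You are in fact more careful than the paper, which neither justifies diagonalizability (your strict-monotonicity argument for $k\mapsto p^k+q^k$ handles this cleanly) nor addresses the entrywise lower bound implicit in the ${\bf\Theta}$ notation; your observation that only the $O$-direction is needed for the subsequent method-of-moments argument is exactly right and is how the paper implicitly proceeds.
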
 
\begin{proof}
In the expansion into idempotents, we have\footnote{The idempotents ${\cal E}_1$,
${\cal E}_2$
and  ${\cal E}_3$ of $\matW_m$ are not the same as those that appear in  $\matW_1$ or $\matW_2$. 
We are only reusing the notation to avoid double indexing.}  
$$\matW_m^n = \lambda_1^n { \cal E}_1 + \cdots + \lambda_m ^n {\cal E}_{m+1}= 1^n 
         { \cal E }_1 + {\bf  \Theta} \bigl((\lambda_2)^n\bigr).$$  
  The principal idempotent ${\cal E}_1$ is $\frac1 {m+1}\, {\bf J}_m$.        
\end{proof}    
\begin{theorem}
The arithmetic coding algorithm produces a number that converges almost surely to  a {\rm Uniform} $[0,1]$ random variable.
\end{theorem}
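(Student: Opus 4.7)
The plan is to prove the theorem in two stages: first establish almost sure convergence of any chosen codeword to some random limit $C$ by monotonicity, and then identify the law of $C$ as Uniform$[0,1]$ via the method of moments, relying on Lemma~\ref{Lm:Wmn}.

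For the first stage, I would observe that the algorithm updates $X_{n+1}$ to either $X_n$ or to $X_n + q(Y_n - X_n) \ge X_n$, so $(X_n)$ is nondecreasing, and symmetrically $(Y_n)$ is nonincreasing. Both sequences lie in $[0,1]$, so each converges almost surely to a limit, say $X_\infty$ and $Y_\infty$. The convergence $L_n = Y_n - X_n \almostsure 0$ already established in the preceding subsection forces $X_\infty = Y_\infty =: C$ a.s. Since the codeword emitted after $n$ steps is, by construction, some number in $[X_n, Y_n]$, the squeeze theorem yields that the transmitted number also converges almost surely to $C$.

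For the second stage, I would apply Lemma~\ref{Lm:Wmn} to the recurrence (\ref{Eq:moments}). Acting $\matW_m^n$ on the initial vector $(0,\ldots,0,1)^T$ produces a column all of whose entries tend to $1/(m+1)$ with geometrically small error, so for every $m \ge 1$ and every $0 \le k \le m$,
$$\E\bigl[X_n^{m-k} Y_n^{k}\bigr] = \frac{1}{m+1} + \Theta\bigl((1-2pq)^n\bigr).$$
In particular $\E[X_n^m] \to 1/(m+1)$, and since $X_n \in [0,1]$ is bounded, bounded convergence combined with $X_n \almostsure C$ gives $\E[C^m] = 1/(m+1)$ for every $m \ge 1$. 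These are the moments of a Uniform$[0,1]$ law, and because a law supported on a compact interval is uniquely determined by its moments (Hausdorff moment problem), we conclude that $C \sim$\,Uniform$[0,1]$.

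The only subtle ingredient is the monotonicity observation: without it, the moment computation would deliver merely convergence in distribution, whereas the theorem claims the stronger almost sure convergence. Everything else is routine bookkeeping once Lemma~\ref{Lm:Wmn} is in hand.
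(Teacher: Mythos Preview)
Your proof is correct and follows essentially the same route as the paper: compute the asymptotic mixed moments via Lemma~\ref{Lm:Wmn} and equation~(\ref{Eq:moments}), identify the limit as Uniform$[0,1]$ by moment determinacy, and combine this with almost-sure convergence of the endpoints. Your explicit monotonicity argument and appeal to bounded convergence are in fact tidier than the paper's treatment, which asserts almost-sure convergence of $X_n$ without justification (having only established $L_n\almostsure 0$) and detours through convergence in distribution before invoking the almost-sure limit.
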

\begin{proof}   
From the representation~(\ref{Eq:moments}) and Lemma~\ref{Lm:Wmn}, 
we have\footnote{The asymptotics here are taken component-wise.}
$$\begin{pmatrix} \E[X_n^m] \\ 
                              \E[X_n^{m-1}Y_n]\\
                              \vdots\\
                              \E[Y_n^m]  \end{pmatrix} 
                                          = \matW^n_m\begin{pmatrix}  0 \\ 
                              0\\
                               \vdots\\   
                               0\\   
                              1\end{pmatrix} \to
                              \frac 1 {m+1} {\bf J}_m\begin{pmatrix}  0 \\ 
                              0\\
                               \vdots\\   
                               0\\   
                              1\end{pmatrix} = 
                              \frac 1 {m+1}\begin{pmatrix}  1 \\ 
                              1\\
                               \vdots\\   
                              1\end{pmatrix}.  $$
The asymptotic $m$th moment of $X_n$ is $\frac 1 {m+1}$. 
These are the moments of the standard uniform random variable, Uniform [0,1]. 
Since the uniform distribution is uniquely determined by its moments, 
the method of moments asserts that~$X_n$ converges in distribution
to the Uniform [0,1] random variable. However, it was demonstrated that 
$X_n$ converges almost surely to a limit. That limiting random variable must have a 
Uniform $[0,1]$ distribution, since convergence almost surely implies convergence in distribution.
\end{proof} 
\section{Concluding remarks}
For a class of Bernoulli ($p$) independent input sequences, we investigated the performance of arithmetic coding using tools from probability theory
and linear algebra. We determined the distribution of  the almost-sure limit of  
each end of
the interval delivered by the algorithm, as well as the distribution of the almost-sure limit of the codeword itself.  
These almost-sure distributions are Uniform~[0,1], specified with with rates of convergence. 
Curiously, these limits do not depend on $p$, showing intrinsic robustness of arithmetic coding.
The content of this paper bolsters and upgrades the implicit mention of uniform distributions associated with arithmetic coding in Section 13.1 of~\cite{Cover}.
\section{Acknowledgment}
We thank Jenny M.\ Rivertz for proofreading the manuscript.
          
\end{document}